\title{Further limitations of the known approaches for matrix multiplication}
\author{Josh Alman\footnote{MIT CSAIL and EECS, jalman@mit.edu. Partially supported by an NSF Graduate Research Fellowship.} \and Virginia Vassilevska Williams\footnote{MIT CSAIL and EECS, virgi@mit.edu. Partially supported by an NSF Career Award, a Sloan Fellowship, NSF Grants CCF-1417238, CCF-1528078 and CCF-1514339, and BSF Grant BSF:2012338.}}
\newtheorem{theorem}{Theorem}[section]
\newtheorem{corollary}{Corollary}[section]
\newtheorem{lemma}{Lemma}[section]
\def \Z {{\mathbb Z}}
\def \F {{\mathbb F}}
\def \R {{\mathbb R}}
\def \E {{\mathbb E}}
\def \T {{\mathbb T}}
\def \i {{\vec i}}
\def \j {{\vec j}}
\def \k {{\vec k}}
\def\eps{\varepsilon}
\newcommand{\Mod}[1]{\ (\mathrm{mod}\ #1)}
\begin{document}

\maketitle

\begin{abstract}
We consider the techniques behind the current best algorithms for matrix multiplication. Our results are threefold. 

(1) We provide a unifying framework, showing that all known matrix multiplication running times since 1986 can be achieved from a single very natural tensor - the structural tensor $T_q$ of addition modulo an integer $q$. 

(2) We show that if one applies a generalization of the known techniques (arbitrary zeroing out of tensor powers to obtain independent matrix products in order to use the asymptotic sum inequality of Sch\"{o}nhage) to an arbitrary monomial degeneration of $T_q$, then there is an explicit lower bound, depending on $q$, on the bound on the matrix multiplication exponent $\omega$ that one can achieve. We also show upper bounds on the value $\alpha$ that one can achieve, where $\alpha$ is such that $n\times n^\alpha \times n$ matrix multiplication can be computed in $n^{2+o(1)}$ time.

(3) We show that our lower bound on $\omega$ approaches $2$ as $q$ goes to infinity. This suggests a promising approach to improving the bound on $\omega$: for variable $q$, find a monomial degeneration of $T_q$ which, using the known techniques, produces an upper bound on $\omega$ as a function of $q$. Then, take $q$ to infinity. It is not ruled out, and hence possible, that one can obtain $\omega=2$ in this way.
 \end{abstract}

\thispagestyle{empty}
\newpage
\setcounter{page}{1}

\section{Introduction}

One of the most fundamental questions in computer science asks how quickly one can multiply two matrices. Since the surprising subcubic algorithm for $n \times n \times n$ matrix multiplication by Strassen in 1969~\cite{strassen}, there has been a long line of work on improving and refining the techniques and speeding up matrix multiplication algorithms (e.g.~\cite{Pan78,Pan80,BCRL79,laser,cw81as,sch81,strassenlaser1,coppersmith,stothers,v12,legall}). Progress on this problem is typically measured in terms of $\omega$, the smallest constant such that, for any $\delta>0$, one can design an algorithm for $n \times n \times n$ matrix multiplication running in time $O(n^{\omega + \delta})$. The biggest open question is whether one can achieve $\omega=2$. The best bound we currently know, due to Le Gall~\cite{legall}, is $\omega \leq 2.3728639$. 

A related line of work \cite{coppersmith, coppersmith1997rectangular, legallrect, legallrect2} focuses on \emph{rectangular} matrix multiplication instead of square matrix multiplication. Here, progress is measured in terms of $\alpha$, the largest constant such that for any $\delta>0$, one can design an algorithm for $n \times n^\alpha \times n$ matrix multiplication running in time $O(n^{2 + \delta})$. Recent work \cite{legallrect2} improved the best known bound to $\alpha > 0.31389$. The two values $\omega$ and $\alpha$ are very related, as $\omega=2$ if and only if $\alpha=1$.

All of the aforementioned bounds on $\omega$ and $\alpha$ follow a particular approach, which works as follows.\footnote{We give a very high level overview here. More precise definitions are given in Section \ref{sec:prelims}. For a more gentle introduction, we recommend the notes by Markus Bl{\"a}ser \cite{blaser}.} The key is to cleverly select a trilinear form (third-order tensor) $\T$ which needs to have two properties. First, there must be an efficient way to compute large tensor powers $\T^{\otimes n}$ of $\T$. This is done by finding a low \emph{border rank expression} for $\T$, which implies (via Sch{\"o}nhage’s asymptotic sum inequality) that for sufficiently large $n$, the power $\T^{\otimes n}$ has low rank. Second, $\T$ must be useful for actually performing matrix multiplication. Multiplying matrices corresponds in a precise way to evaluating a certain \emph{matrix multiplication tensor}, and so to use $\T$ for this task, one needs to show that there is a `degeneration' transforming $\T$ into a disjoint sum of matrix multiplication tensors. Combining these two properties of $\T$ yields an algorithm for matrix multiplication (see Lemma~\ref{lem:schon} below for the precise formula).

Of course, the resulting runtime depends on the choice of the tensor $\T$ as well as the bounds one can prove for the two desired properties. Strassen's original algorithm picked $\T$ to be the tensor for $2 \times 2 \times 2$ matrix multiplication itself. Later work used more and more elaborate tensors and corresponding border rank expressions, culminating with the most recent algorithms using the now-famous \emph{Coppersmith-Winograd tensor}. All these tensors seem to come `out of nowhere', and in particular, come up with seemingly `magical' border rank identities to show that they have low border rank. We make some progress demystifying the tensors and their border rank expressions below.

\subsection{The best known bounds on $\omega$ are actually from $T_q$.}
Our first result is a \emph{unifying approach} to achieving all known bounds of $\omega$ (\cite{laser,coppersmith,stothers,legall}) since Strassen's 1986 proof that $\omega<2.48$. 

A simple remark first pointed out to us by Michalek~\cite{michalek-personal} is that the so called Coppersmith-Winograd tensor used in the papers on matrix multiplication since 1990~\cite{coppersmith,stothers,legall}, can be replaced with an equivalent tensor, rotating the original slightly in a certain way (see the Preliminaries), without changing any of the proofs, and thus yielding the same bounds on $\omega$.

 With this in mind, we consider a tensor $T_q$, the \emph{structural tensor of $\Z_q$}, and give a very simple low rank expression for it based on roots of unity (this expression is natural and likely well-known). We then show that the tensor in \cite{laser} and the rotated Coppersmith-Winograd tensors that can be used in \cite{coppersmith,stothers,legall,v12},
  are all actually
 straightforward monomial degenerations of $T_q$. Since a monomial degeneration of a rank expression gives a border rank expression, this (for example) yields a straightforward border rank expression for the (rotated) Coppersmith-Winograd tensor, which is more intuitive than the border rank expressions from past work. 
 
 Another way to view this fact is that \emph{all the bounds on $\omega$ since \cite{coppersmith} can be viewed as using $T_q$ (in fact for $q=7$ or $8$) as the underlying tensor $\T$}! This also suggests a potential way to improve the known bounds on $\omega$: study other monomial degenerations of $T_q$.

\subsection{Limitations on monomial degenerations of $T_p$.}
Our second and main result is a lower bound on how fast a matrix multiplication algorithm designed in this way can be whenever $\T$ is a monomial degeneration of $T_p$:

\begin{theorem}[Informal] \label{thm:maininformal}
For every $p$, and for every $\eps \in (0,1]$, there is an explicit constant $\nu_{p,\eps} > 1$ such that any algorithm for $n \times n^\eps \times n$ matrix multiplication designed in the above way using $T_p$, or a monomial degeneration of $T_p$, runs in time $\Omega(n^{(1+\eps) \nu_{p,\eps}})$. (See Theorem~\ref{thm:main} below for the precise statement).
\end{theorem}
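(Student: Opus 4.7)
The plan is to instantiate the standard lower-bound framework for laser-method-type algorithms in the specific combinatorial setting of $T_p$. Any algorithm of the prescribed form starts from a monomial degeneration $T'$ of $T_p$, raises it to a tensor power $(T')^{\otimes N}$, zeros out variables to extract a direct sum $\bigoplus_i \langle a_i, b_i, c_i\rangle$ of matrix product tensors, and applies Sch\"onhage's asymptotic sum inequality. Since the roots-of-unity decomposition gives $\underline{R}(T') \leq \underline{R}(T_p) \leq p$, the resulting bound on the rectangular matrix multiplication exponent is determined entirely by the ``value'' extractable from zeroings of tensor powers of $T'$. My strategy is to upper bound this extractable value uniformly over all $T' \subseteq T_p$ and all $N$, then invert Sch\"onhage's inequality to obtain the claimed running-time lower bound.

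The key technical step is to compress the optimization over $N$ and over zeroings into a single concave program. A zeroing of $(T')^{\otimes N}$ into independent matrix products amounts to a selection of ``types'' at each of the $N$ coordinates in $\Z_p^N$; by standard symmetrization and concentration arguments (in the spirit of Ambainis--Filmus--Le Gall and Alman--Vassilevska Williams), this reduces to a single probability distribution $\pi$ on $\Z_p^3$ supported in $\{(i,j,k) : i+j+k \equiv 0 \Mod{p}\}$. The value $\omega_\eps(T')$ achievable for $n \times n^\eps \times n$ matrix multiplication from $T'$ then becomes a concave functional of the marginal entropies $H(\pi_X), H(\pi_Y), H(\pi_Z)$ and the joint entropy $H(\pi)$, weighted appropriately by $\eps$. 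To obtain $\nu_{p,\eps} > 1$, I exploit that the support of $\pi$ lies in a $2$-dimensional affine subspace of $\Z_p^3$: a Fourier/convexity argument on $\Z_p$ shows that $\pi$ cannot simultaneously have all three marginals close to uniform and joint entropy close to $2\log p$. Quantifying this gap yields the explicit constant $\nu_{p,\eps}$, and Sch\"onhage's inequality in its rectangular form then converts it into the stated $\Omega(n^{(1+\eps)\nu_{p,\eps}})$ lower bound.

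The main obstacle is making the compression to a single concave program fully rigorous and uniform across every $T' \subseteq T_p$. Unlike the analysis of the Coppersmith--Winograd tensor, where the block structure of the tensor naturally restricts admissible distributions, $T_p$ has only trivial block structure, so the constraint on $\pi$ comes purely from the additive condition $i+j+k \equiv 0 \Mod{p}$. Showing that this additive condition alone suffices to force $\nu_{p,\eps} > 1$ for all $p$ and $\eps \in (0,1]$ --- as opposed to yielding only the trivial bound $\nu_{p,\eps} = 1$ --- is the crux of the argument, and will require a careful accounting of how the dimensions $a_i, b_i, c_i$ of the extracted matrix products scale with the shape parameter $\eps$. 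I also anticipate that the resulting $\nu_{p,\eps}$ should approach $1$ as $p \to \infty$, matching the third part of the abstract, so the quantitative form of the Fourier estimate must be delicate enough to track this dependence on $p$.
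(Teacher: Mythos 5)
Your proposal takes a genuinely different route from the paper, but the key step you rely on is incorrect as stated. You want to reduce the problem to a concave entropy optimization over distributions $\pi$ on $\{(i,j,k) \in \Z_p^3 : i+j+k \equiv 0 \Mod p\}$, and then argue via ``Fourier/convexity'' that such a $\pi$ cannot simultaneously have near-uniform marginals and joint entropy near $2\log p$. But the uniform distribution on that very constraint set is a counterexample: each marginal is exactly uniform on $\Z_p$, and the joint entropy is exactly $\log(p^2) = 2\log p$. The additive condition $i+j+k \equiv 0 \Mod p$ by itself places \emph{no} entropy penalty, so the quantity $\nu_{p,\eps}$ that would come out of your program is $1$, not strictly greater than $1$. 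The gap you need does not come from the support structure of $T_p$ at the level of a single coordinate; it comes from the requirement that the extracted matrix products be \emph{independent} (disjoint) across all $N$ coordinates simultaneously, which is a global combinatorial constraint that a local/single-coordinate entropy relaxation cannot see.

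The paper avoids this by never passing to an entropy program at all. Instead it shows (Lemma~\ref{lemma:indep}) that any matrix multiplication tensor $\langle q,q^\eps,q\rangle^{\otimes n}$ can itself be zeroed out into $q^{(1+\eps)n - o(n)}$ \emph{independent triples} via a Salem--Spencer / hashing argument, and then (Lemma~\ref{lemma:mondeg}) that independent-triple zeroings transfer from a monomial degeneration $T$ back up to $T_p$ with only polynomial loss. Combining these, any fast algorithm from a monomial degeneration of $T_p$ produces, inside $T_p^{\otimes N}$, a huge collection of independent triples $(a_i,b_i,c_i)$ with $a_i+b_i+c_i \equiv 0$ in $\Z_p^N$; independence forces this collection to be a \emph{tri-colored sum-free set}. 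The lower bound $\nu_{p,\eps} > 1$ is then imported wholesale from the recent cap-set-style upper bounds on tri-colored sum-free sets in $\Z_p^N$ (Kleinberg; Blasiak et al.), which are proved by the Croot--Lev--Pach/Ellenberg--Gijswijt polynomial/slice-rank method and are strictly stronger than anything obtainable from Fourier or entropy arguments on $\Z_p$. In short: the combinatorial hardness you are trying to extract from a convexity argument is exactly the content of the tri-colored sum-free set theorem, and that theorem requires the polynomial method. Your framework would need to invoke those bounds explicitly to close the gap, at which point it would collapse into the paper's argument.
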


The constant $\nu_{p,\eps}$ is defined as follows. Consider first when $p$ is a fixed prime or power of a prime. Let $z$ be the unique real number in $(0,1)$ such that $3\sum_{j=1}^{p-1}z^j = (p-1) (1-2z^p)$; then 
$$\nu_{p,\eps}:=(1+\eps)\ln \left[\frac{1-z^p}{(1-z) z^{(p-1)/3}}\right].$$

There is also a variant of Theorem~\ref{thm:maininformal} that holds for $T_p$ when $p$ is not necessarily a prime power, but the constant $\nu_{p,\eps} > 1$ is slightly different.

In particular, this shows that:
\begin{itemize}
    \item This approach yields a square matrix multiplication algorithm with runtime at best $\Omega(n^{2 \nu_{p,1}})$, with exponent $2 \nu_{p,1} > 2$. Hence, this approach for a fixed $p$ cannot yield $\omega=2$.
    \item  Let $\eps_p \in (0,1)$ be such that $(1+\eps_p) \nu_{p,\eps} = 2$. Then, this approach for a fixed $p$ cannot yield a value of $\alpha$ bigger than $\eps_p$.
\end{itemize}

For modest values of $p$, the value $\nu_p:=\nu_{p,1}$ is a fair bit larger than $1$. For instance, $\nu_7 \approx 1.07065$. As we will show shortly, the best known algorithms for matrix multiplications use the approach above with a (rotated) Coppersmith-Winograd tensor which is a monomial degeneration of $T_7$. 
Our theorem implies among other things that using the approach with $T_7$ as the starting tensor cannot yield a bound on $\omega$ better than $2.14$, no matter how one zeroes out the tensor powers of $T_7$ or its monomial degenerations. 
We plot the resulting bounds on $\omega$ and $\alpha$ for varying $p$, in Figures~\ref{fig:graphs} and~\ref{fig:graphs2} (for technical reasons we discuss below, we get different bounds depending on whether $q$ is a power of a prime).

\begin{figure}[h!]
\centering
\begin{subfigure}{.5\textwidth}
  \centering
  \includegraphics[width=0.95\linewidth]{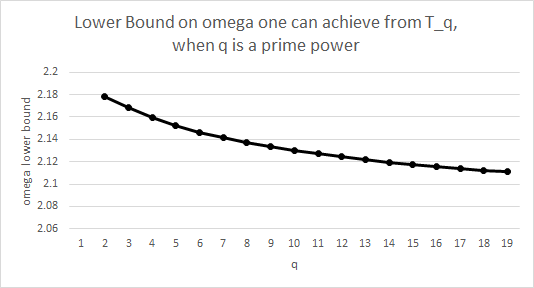}
  \caption{Lower bound on $\omega$ that one can achieve\\ using $T_q$ when $q$ is a power of a prime.\\ The bound approaches $2$ as $q \to \infty$.}
\end{subfigure}%
\begin{subfigure}{.5\textwidth}
  \centering
  \includegraphics[width=0.95\linewidth]{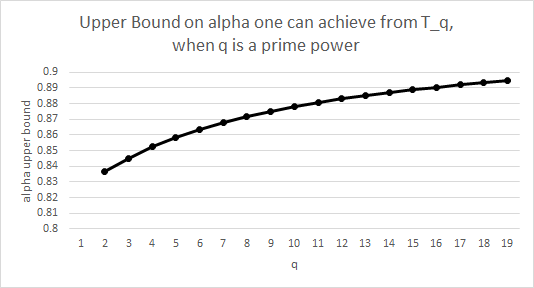}
  \caption{Upper bound on $\alpha$ that one can achieve\\ using $T_q$ when $q$ is a power of a prime.\\ The bound approaches $1$ as $q \to \infty$.}
\end{subfigure}
\caption{Bounds on $\omega$ and $\alpha$ that follow from Theorem~\ref{thm:maininformal} when $q$ is a prime power}
\label{fig:graphs}
\end{figure}

\begin{figure}[h!]
\centering
\begin{subfigure}{.5\textwidth}
  \centering
  \includegraphics[width=0.95\linewidth]{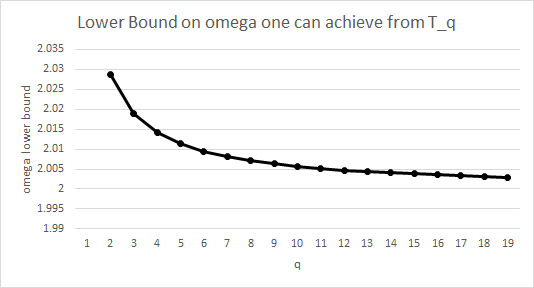}
  \caption{Lower bound on $\omega$ that one can achieve\\ using $T_q$.\\ The bound approaches $2$ as $q \to \infty$.}
\end{subfigure}%
\begin{subfigure}{.5\textwidth}
  \centering
  \includegraphics[width=0.95\linewidth]{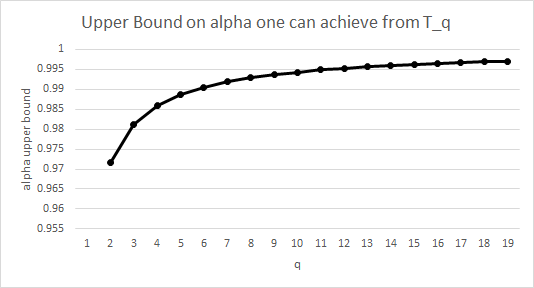}
  \caption{Upper bound on $\alpha$ that one can achieve\\ using $T_q$.\\ The bound approaches $1$ as $q \to \infty$.}
\end{subfigure}
\caption{Bounds on $\omega$ and $\alpha$ that follow from Theorem~\ref{thm:maininformal} for any $q$}
\label{fig:graphs2}
\end{figure}

\subsection{A potential idea for improving $\omega$.}
It should be noted that, despite our lower bounds, not all hope is lost for achieving $\omega=2$ using $T_q$ tensors. Indeed, in the limit as $q \to \infty$, our $\omega$ lower bound approaches $2$, and our $\alpha$ upper bound approaches $1$ (see Lemma~\ref{lem:goesto2} in Appendix~\ref{sec:calculations} for a proof). Hence, our lower bound does not rule out achieving a runtime for $n \times n \times n$ matrix multiplication of $O(n^{2 + \delta})$ for all $\delta>0$ by using bigger and bigger values of $q$. We find this approach very exciting.

\subsection{Tri-Colored Sum-Free Sets} \label{sec:trisets}

A key component of our lower bound proof is a recent upper bound proved on the asymptotic size of a family of combinatorial objects called tri-colored sum-free sets. For an abelian group $G$, a \emph{tri-colored sum-free set in $G^n$} is a set of triples $(a_i, b_i, c_i) \in (G^n)^3$ such that $a_i + b_j + c_k = 0$ if and only if $i=j=k$. In this paper we are especially interested in tri-colored sum-free sets over $\Z_q^n$.

Recent work \cite{eg,kleinberg,blasiak,norin2016,pebody} has proved upper bounds on how large tri-colored sum-free sets in $\Z_q^n$ can be. The bound is originally given in terms of the entropy of certain symmetric distributions, but we give a more explicit form written out by \cite{norin2016,pebody} here.

For any integer $q \geq 2$ which is a power of a prime, let $\rho$ be the unique number in $(0,1)$ satisfying
$$\rho + \rho^2 + \cdots + \rho^{q-1} = \frac{q-1}{3} (1 + 2 \rho^q ).$$
Then, define $\gamma_q \in \R$ by $\gamma_q := \ln(1-\rho^q) - \ln(1-\rho) - \frac{q-1}{3} \ln(\rho)$.

\begin{theorem}[\cite{kleinberg}]
Let $q$ be any prime or power of a prime. Then, any tri-colored sum-free set in $\Z_q^n$ has size at most $e^{\gamma_q n}$. Moreover, there exists a tri-colored sum-free set in $\Z_q^n$ with size $e^{\gamma_q n - o(n)}$.
\end{theorem}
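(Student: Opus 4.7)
The plan is to follow the polynomial / slice-rank method of Croot--Lev--Pach and Ellenberg--Gijswijt, in the form refined for tri-colored sum-free sets by Blasiak et al.\ and Kleinberg. Given $S=\{(a_i,b_i,c_i)\}_{i=1}^N$, consider the tensor $M\in\F_q^{S\times S\times S}$ with $M_{ijk}=[a_i+b_j+c_k=0]$; the sum-free hypothesis forces $M$ to be the identity diagonal on $S$, so $\mathrm{sr}(M)=N$, and the entire argument reduces to an upper bound on $\mathrm{sr}(M)$.

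For $q$ prime, Fermat's little theorem gives $1-u^{q-1}=[u=0]$ in $\F_q$, so
\[
M_{ijk}\;=\;\prod_{\ell=1}^n\bigl(1-(a_{i,\ell}+b_{j,\ell}+c_{k,\ell})^{q-1}\bigr),
\]
which expands as an $\F_q$-combination of monomials $\prod_\ell a_{i,\ell}^{\alpha_\ell}b_{j,\ell}^{\beta_\ell}c_{k,\ell}^{\gamma_\ell}$ with each exponent in $\{0,\dots,q-1\}$ and $\alpha_\ell+\beta_\ell+\gamma_\ell\le q-1$. For prime-power $q=p^k$ an analogous identity holds over $\F_q$, and one bridges between $\Z_q$ and $\F_q$ coordinate-by-coordinate in base $p$ with additional book-keeping on degrees. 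In any such monomial at least one of $|\alpha|:=\sum_\ell\alpha_\ell$, $|\beta|$, $|\gamma|$ is $\le n(q-1)/3$; grouping by which sum is minimal and absorbing the other two variables into a single factor presents each group as a sum of rank-one slices, one per exponent vector. Hence
\[
N\;\le\;3\,\bigl|\{\alpha\in\{0,\dots,q-1\}^n:\ |\alpha|\le n(q-1)/3\}\bigr|.
\]

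Estimate the right-hand side by a Markov shift: for any $\rho\in(0,1)$,
\[
\bigl|\{\alpha:|\alpha|\le n(q-1)/3\}\bigr|\;\le\;\rho^{-n(q-1)/3}\!\!\sum_{\alpha\in\{0,\dots,q-1\}^n}\!\rho^{|\alpha|}\;=\;\Bigl[\tfrac{1-\rho^q}{(1-\rho)\rho^{(q-1)/3}}\Bigr]^n.
\]
Minimising the bracket in $\rho$ and clearing denominators produces the first-order condition $\sum_{j=1}^{q-1}\rho^j=\tfrac{q-1}{3}(1+2\rho^q)$, which is exactly the theorem's defining equation; at this critical point the bracket equals $e^{\gamma_q}$, so $N\le 3\,e^{\gamma_q n}=e^{\gamma_q n+o(n)}$, the leading factor of $3$ being absorbed into the $o(n)$ term.

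For the matching lower bound I would use a probabilistic tensor-power construction: pick a distribution $\mu$ on $\{(a,b,c)\in\Z_q^3:a+b+c=0\}$ whose single-coordinate marginals are the $\rho$-geometric law on $\{0,\dots,q-1\}$ (same $\rho$ as above), form $\mu^{\otimes n}$, and extract a tri-colored sum-free subfamily from the typical sequences via a greedy / hypergraph-removal argument, producing size $e^{\gamma_q n-o(n)}$. The principal obstacle is the slice-rank splitting in the prime-power case: the polynomial identity is cleanest for $q$ prime, and for $q=p^k$ one must preserve the degree bound while working in the cyclic group $\Z_{p^k}$ rather than the field $\F_{p^k}$. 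The entropy optimisation and the algebraic identification of the Markov optimum with the theorem's defining equation are routine but fiddly calculations once the framework is in place.
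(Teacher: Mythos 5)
This theorem is cited from \cite{kleinberg} and is not proved in the paper, so there is no internal proof to compare against; I can only assess your proposal on its own merits.

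Your outline for the prime case is the standard Croot--Lev--Pach / Ellenberg--Gijswijt slice-rank argument and is essentially correct: the indicator identity $1-u^{q-1}=[u=0]$ over $\F_q$, the degree constraint $\alpha_\ell+\beta_\ell+\gamma_\ell\le q-1$ forcing $\min(|\alpha|,|\beta|,|\gamma|)\le n(q-1)/3$, and the Markov/generating-function bound on $|\{\alpha:|\alpha|\le n(q-1)/3\}|$ all go through. I also checked that your first-order condition from minimizing $\frac{1-\rho^q}{(1-\rho)\rho^{(q-1)/3}}$, namely $\sum_{j=1}^{q-1} j\rho^j = \frac{q-1}{3}\sum_{j=0}^{q-1}\rho^j$, is algebraically equivalent (multiply through by $1-\rho$ and simplify) to the theorem's $\sum_{j=1}^{q-1}\rho^j = \frac{q-1}{3}(1+2\rho^q)$, so the identification of the optimum with $e^{\gamma_q}$ is right, up to the leading factor of $3$ that you absorb into $o(n)$.

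The genuine gap is the prime-power case. For $q=p^k$ with $k>1$, the group $\Z_{p^k}$ is \emph{not} the additive group of $\F_{p^k}$: the latter is $(\Z_p)^k$, which has exponent $p$, whereas $\Z_{p^k}$ has an element of order $p^k$. Consequently there is no group homomorphism, let alone embedding, from $\Z_{p^k}$ into any $\Z_p^m$, and the base-$p$ decomposition you invoke does not respect addition because of carries. So ``an analogous identity holds over $\F_q$, and one bridges between $\Z_q$ and $\F_q$ coordinate-by-coordinate in base $p$'' is not a fixable bookkeeping issue; it is the wrong bridge. The actual argument (due to Blasiak et al., and made explicit by Norin and Pebody) replaces the polynomial method over a field with an analysis inside the modular group algebra $\F_p[\Z_{p^k}^n]$, which is local, filtered by powers of its augmentation ideal, and the degree bound is replaced by a bound on the filtration level; this is a different technical device from the $\F_q$-polynomial identity and needs to be argued from scratch. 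Your lower-bound sketch points to the correct Kleinberg--Sawin--Speyer probabilistic/hypergraph-removal construction, but as written it is too brief to evaluate; in particular, matching the upper bound requires choosing a distribution on $\{(a,b,c):a+b+c=0\}$ whose three marginals simultaneously have entropy $\gamma_q$, which is a constraint beyond just matching a single geometric marginal and deserves at least a sentence of justification.
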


One can verify (see Lemma~\ref{lem:goesto2} in Appendix~\ref{sec:calculations}) that $e^{\gamma_q} < q$, meaning in particular that there is no tri-colored sum-free set in $\Z_q^n$ of size $q^{n - o(n)}$. When $q$ is not a prime power, one can also prove this, although the upper bound is not known to be as strong:

\begin{theorem}[\cite{blasiak}]
Let $q \geq 2$ be any positive integer, and let $\kappa := \frac12 \log((2/3)2^{3/2}) \approx 0.02831$. Then, any tri-colored sum-free set in $\Z_q^n$ has size at most $q^{n(1 - \kappa/q + o(1))}$.
\end{theorem}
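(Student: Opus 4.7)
The plan is to follow the standard slice-rank route to bounds on tri-colored sum-free sets, reducing the claim to an asymptotic slice-rank bound on the structural tensor of $\Z_q$, and to prove that bound uniformly in $q$ via an explicit constant-size rank-saving gadget inside $\Z_q$ that does not rely on $\Z_q$ being a field.

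First I would set $S_q\in\{0,1\}^{q\times q\times q}$ to be the structural tensor of $\Z_q$, $(S_q)_{a,b,c}=1$ iff $a+b+c\equiv 0\pmod q$, so that $S_q^{\otimes n}$ is the structural tensor of $\Z_q^n$. Given a tri-colored sum-free set $\{(a_i,b_i,c_i)\}_{i=1}^N$ in $\Z_q^n$, the restriction of $S_q^{\otimes n}$ to the coordinate sets $\{a_i\},\{b_j\},\{c_k\}$ is the $N\times N\times N$ diagonal tensor, whose slice rank equals $N$. Since slice rank is monotone under restriction, $N\le\operatorname{slicerank}(S_q^{\otimes n})$, and by submultiplicativity under tensor products (passing to the asymptotic version) the task reduces to proving $\operatorname{slicerank}(S_q^{\otimes n})\le q^{n(1-\kappa/q+o(1))}$.

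For $q$ a prime power, one could invoke the Croot--Lev--Pach polynomial method over $\F_q$, which actually gives the stronger preceding theorem; for general $q$ the ring $\Z_q$ is not a field and that method fails. Instead, for every $q\ge 4$ I would fix the symmetric three-element subset $\{0,1,-1\}\subset\Z_q$ and analyse the $3\times 3\times 3$ sub-tensor $g$ of $S_q$ it spans. This $g$ has exactly $7$ nonzero entries (a `$W$-state'-type tensor closely related to the small Coppersmith--Winograd tensor), and its asymptotic slice rank is strictly less than $3$, with the gap being precisely $\kappa=\tfrac12\log((2/3)\cdot 2^{3/2})$ via a direct Strassen-style decomposition. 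Tensoring as many copies of $g$ as possible into $S_q^{\otimes n}$ produces a per-coordinate saving; because $g$ occupies only a $(3/q)$-fraction of each slice of $S_q$, an averaging argument over which three-element subset to pin down in each of the $n$ coordinates converts the per-copy saving into a global factor of $q^{-\kappa/q+o(1)}$ per coordinate, which is what is needed. The small cases $q=2,3$ are handled separately via the prime-power theorem, where the bound is strictly stronger and so does not affect the statement.

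The main obstacle will be the gadget analysis: computing (or sharply upper-bounding) the asymptotic slice rank of the $3\times 3\times 3$ $7$-ones tensor $g$ without any field input, and then doing the combinatorial accounting that turns a saving localised to a $(1/q)$-fraction of the structure into the claimed $(1-\kappa/q)$ factor rather than a weaker one. Getting this dilution right requires handling the correlations between the chosen three-element subsets across different coordinates; I expect a probabilistic argument, choosing the placement of the gadget independently per coordinate and then derandomising, to go through and to match the claimed constant.
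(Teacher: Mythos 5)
The theorem you are proving is not proved in this paper at all---it is imported from Blasiak et al.~\cite{blasiak}---so your route has to stand on its own, and it has a fatal structural flaw: a characteristic-free upper bound on $\operatorname{slicerank}(S_q^{\otimes n})$ of the form $q^{n(1-\delta)}$ cannot exist. Slice rank depends on the field, and over $\mathbb{C}$ the paper's own rank-$q$ identity for $T_q$ (the discrete Fourier transform) is an invertible change of basis in each of the three variable sets that turns $T_q^{\otimes n}$ into a diagonal tensor with $q^n$ nonzero entries; since slice rank is invariant under such changes of basis, $\operatorname{slicerank}_{\mathbb{C}}(S_q^{\otimes n})=q^n$ exactly. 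So any correct slice-rank argument must fix a field of characteristic $p\mid q$ and exploit that characteristic essentially (this is exactly why the prime-power case has a much stronger bound than the general case); ``without any field input'' is precisely the feature that dooms the plan. Relatedly, exhibiting a low-slice-rank gadget \emph{inside} $S_q$ can only ever certify lower bounds, since slice rank is monotone under restriction; an upper bound on $\operatorname{slicerank}(S_q^{\otimes n})$ requires a global decomposition of the whole power, which your outline never produces.

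The gadget claim itself is also false. The $3\times3\times3$ subtensor $g$ of $S_q$ supported on $\{-1,0,1\}$ (seven ones, for $q\geq 4$) has asymptotic slice rank exactly $3$ over every field: its support is tight (each support triple sums to $0$ over $\Z$), and the distribution putting mass $1/6$ on each of the six permutations of $(0,1,-1)$ has all three marginals uniform, so the Coppersmith--Winograd/Strassen combinatorial degeneration for tight supports (the same independent-triples machinery as in Sections~\ref{sec:triples} and~\ref{sec:zeroings} of this paper) produces $3^{n-o(n)}$ independent triples inside $g^{\otimes n}$, matching the trivial upper bound $3^n$. So there is no per-copy deficit at all---unlike the true W-state, whose marginals cannot all be uniform---and in particular no deficit equal to $\kappa$. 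Indeed $\kappa$ does not come from any local gadget: numerically $\kappa=\tfrac12(\ln 2-\gamma_2)=\tfrac12\ln\left(\tfrac23\,2^{2/3}\right)$ (the exponent $2^{3/2}$ in the statement should be read as $2^{2/3}$ to match $0.02831$), i.e.\ half the per-coordinate deficit of the $\Z_2$ tricolored sum-free bound, which in \cite{blasiak} is pushed through a lossy group-theoretic reduction and diluted by $1/q$; it is not the slice-rank gap of a three-symbol subtensor. A smaller inaccuracy: for prime powers $q=p^k$ the relevant group is $\Z_{p^k}^n$, not $\F_q^n\cong\Z_p^{kn}$, so ``CLP over $\F_q$'' does not give the prime-power theorem you invoke; that case needs the Kleinberg--Sawin--Speyer-style argument over $\Z/p^k$.
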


For notational simplicity in our main results in Section~\ref{sec:main}, define $\gamma_q := (1-\kappa/q)\log(q)$ when $q \geq 2$ is not a power of a prime.

\subsection{Proof Outline}

In Section~\ref{sec:prelims} we formally define all the notions related to tensors which are necessary for the rest of the paper, and in Section~\ref{sec:modqdegens} we give our simple rank expression for $T_q$ and straightforward monomial degenerations of $T_{q+2}$ into $CW_q$ as well as other tensors $\T$ from past work on matrix multiplication algorithms. The remainder of the paper gives the proof of Theorem~\ref{thm:maininformal}, which proceeds in four main steps:
\begin{itemize}
\item In Section~\ref{sec:modqdegens}, we give a simple rank expression for $T_q$, and show that the rotated Coppersmith-Winograd tensor can be found as a simple monomial degeneration of $T_q$.
\item In Section~\ref{sec:triples}, we show that every matrix multiplication tensor has a zeroing out into a large number of independent triples. This generalizes a classical result that matrix multiplication tensors have monomial degenerations into a large number of independent triples.
\item In Section~\ref{sec:zeroings}, we show that if tensor $A$ is a monomial degeneration of tensor $B$, and large powers of $A$ can be zeroed out into many independent triples, then large powers of $B$ can as well.
\item Finally, in Section~\ref{sec:main}, we combine the above to show that if any tensor $\T$ is a monomial degeneration of $T_q$, and yields a fast matrix multiplication algorithm (meaning it can be zeroed out into many independent triples), then $T_q$ can be zeroed out into many independent triples as well. By noticing that independent triples in $T_q$ correspond to tri-colored sum-free sets, and combining with the upper bounds on the size of such a set, we get our lower bound.
\end{itemize}

\subsection{Comparison with Past Work}

There are two papers which have proved lower bounds on the value of $\omega$ that one can achieve using certain techniques.

The first is a work by Ambainis et al. \cite{ambainis}. They show a lower bound of $\Omega(n^{2.3078})$ for any algorithm for $n \times n \times n$ matrix multiplication one can design using the `laser method with merging' using the Coppersmith-Winograd tensor and its relatives. The laser method is a technique proposed by Strassen~\cite{laser} and used by all recent work \cite{coppersmith,stothers,v12,legall,legallrect2} in order to show that the Coppersmith-Winograd tensor has a zeroing out into many big disjoint matrix multiplication tensors (the second property of the two properties of a tensor $\T$ we described earlier). While the bound that Ambainis et al. get is better than ours, our result is much more general: First, the Ambainis et al. bound is for algorithms which use the Coppersmith-Winograd tensor and some tensors like it, whereas ours applies to any tensor which is an arbitrary monomial degeneration of $T_q$. Second, their bound only applies when the laser method with merging is used to zero out the tensor into matrix multiplication tensors, whereas ours applies to any possible monomial degeneration into matrix multiplication tensors.

The second prior work is by Blasiak et al. \cite{blasiak}. Like us, the authors also use recent bounds on the size of certain tri-colored sum-free sets in order to prove lower bounds. However, rather than the tensor-based approach to matrix multiplication algorithms which we have been discussing, and which has been used in all of the improvements to $\omega$ and $\alpha$ to date, they instead focus on the `group-theoretic approach' to matrix multiplication \cite{cohn2003group, cohn2005group}. This approach has been designed around formulating approaches that would imply $\omega = 2$ rather than on attempting any small improvement to the bounds on $\omega$, and this paper refutes some earlier conjectures along these lines. The work of Blasiak et al. implies that certain approaches to achieving $\omega=2$ are impossible, similar to our work here. 

In personal communication, Cohn~\cite{cohn-personal} stated that the Coppersmith-Winograd tensor CW$_q$ leads to a STPP (simultaneous triple product property) construction in $\Z_m^n$ with $m=q$ and $n$ tending to infinity. Blasiak et al. present lower bounds on what can be proved about $\omega$ using the group theoretic approach using STPP constructions in $\Z_m^n$ for any fixed $m$, and hence their results imply that the {\em group-theoretic variant} of the Coppersmith-Winograd approach cannot yield $\omega=2$ using a fixed $q$. It is not clear exactly what lower bounds this result implies for the original laser method approach, or for arbitrary monomial degenerations of $T_q$. Thus, we consider our results complementary to those of Blasiak et al.
Furthermore, our results include limitations for rectangular matrix multiplication, which the prior work does not mention.

\section{Preliminaries} \label{sec:prelims}

In this section we introduce all the notions related to tensors which are used in the rest of the paper.

\subsection{Tensor Definitions}

Let $X = \{x_1, \ldots, x_n \}$, $Y = \{y_1, \ldots, y_m \}$, and $Z = \{ z_1, \ldots, z_p \}$ be three sets of formal variables. A \emph{tensor over $X,Y,Z$} is a trilinear form $$T = \sum_{i=1}^n \sum_{j=1}^m \sum_{k=1}^p T_{ijk} x_i y_j z_k,$$ where the $T_{ijk}$ terms are elements of a field $\F$. The \emph{size} of a tensor $A$, denoted $|A|$, is the number of nonzero coefficients $A_{ijk}$. There are three particular tensors we will focus on in this paper. The \emph{matrix multiplication tensor $\langle n,m,p \rangle$} is given by $$\langle n,m,p \rangle = \sum_{i=1}^n \sum_{j=1}^m \sum_{k=1}^p  x_{ij} y_{jk} z_{ki}.$$ For a positive integer $q$, the \emph{structural tensor of $\Z_q$}, denoted $T_q$, is given by $$T_q = \sum_{i=0}^{q-1}\sum_{j=0}^{q-1} x_i y_j z_{-i-j\Mod{q}}.$$ 

For any positive integer $q$, the $q$th Coppersmith-Winograd tensor C$_q$~\cite{coppersmith} is given by $x_0 y_0 z_{q+1} + x_0 y_{q+1} z_0 + x_{q+1} y_0 z_0 +\sum_{i=1}^q (x_0y_iz_i + x_iy_0z_i+x_iy_iz_0)$. It is not hard to verify that using the Coppersmith-Winograd approach, one can obtain exactly the same values for $\omega$ from the following
 \emph{rotated Coppersmith-Winograd tensor} $CW_q$, given by
$$CW_q = x_0y_0z_{q+1} + x_0y_{q+1}z_0+x_{q+1}y_0z_0 + \sum_{k=1}^q \left( x_0 y_k z_{q+1-k} + x_k y_0 z_{q+1-k} + x_k y_{q+1-k} z_0 \right).$$

The main reason why CW$_q$ works just as well as the original Coppersmith-Winograd tensor C$_q$ is because they both have border rank $q+2$ and because of the following other structural reason which is what is used in the prior work on fast matrix multiplication:

Let $X_0 = \{x_0\}$, $X_1=\{x_1,\ldots, x_q\}$, $X_2=\{x_{q+2}\}$. Similarly, let $Y_0 = \{y_0\}$, $Y_1=\{y_1,\ldots, y_q\}$, $Y_2=\{y_{q+2}\}$, and $Z_0 = \{z_0\}$, $Z_1=\{z_1,\ldots, z_q\}$, $Z_2=\{z_{q+2}\}$. When you restrict C$_q$ and CW$_q$ to $X_0\times Y_2\times Z_0$, or $X_2\times Y_0\times Z_0$, or $X_0\times Y_0\times Z_2$, both of them are isomorphic to $\langle 1,1,1\rangle$. When you restrict them to $X_0\times Y_1\times Z_1$, both are isomorphic to $\langle 1,1,q\rangle$, when you restrict them to $X_1\times Y_0\times Z_1$, both are isomorphic to $\langle q,1,1\rangle$, and when you restrict them to $X_1\times Y_1\times Z_0$, both are isomorphic to $\langle 1,q,1\rangle$. 
The Coppersmith-Winograd approach only looks at products of these blocks in higher tensor powers, which are hence isomorphic to the same matrix multiplication tensors and give the same bounds on $\omega$.

\subsection{Subsets and Degenerations}

For two tensors $A,B$, we say that $A \subseteq B$ if $A_{ijk}$ is always either $B_{ijk}$ or $0$. For instance, we can see that $CW_q \subseteq T_{q+2}$. We furthermore say that $A$ is a \emph{monomial degeneration} of $B$ if $A \subseteq B$ and there are functions $a : X \to \Z$, $b : Y \to \Z$, and $c : Z \to \Z$ such that whenever $B_{ijk} \neq 0$,
\begin{itemize}
    \item we have $a(x_i) + b(y_j) + c(z_k) \geq 0$, and
    \item furthermore, $a(x_i) + b(y_j) + c(z_k) = 0$ if and only if $A_{ijk} \neq 0$ as well.
\end{itemize}

We note that in prior work, degenerations are defined via polynomials in a variable $\eps$, however when the degenerations are single monomials, the above definition is equivalent, where $a,b,c$ give the corresponding exponents of $\eps$.

Finally, we say that $A$ is a \emph{zeroing out} of $B$ if $A$ is a monomial degeneration of $B$ such that $a(x) \geq 0$ for all $x \in X$, $b(y) \geq 0$ for all $y \in Y$, and $c(z) \geq 0$ for all $z \in Z$. One can think of this as substituting $0$ for any variable which $a,b$, or $c$ maps to a positive value.

\subsection{Tensor Product}

Let $X,X',Y,Y',Z,Z'$ be sets of formal variables. If $A$ is a tensor over $X,Y,Z$, and $B$ is a tensor over $X',Y',Z'$, then the \emph{tensor product of $A$ and $B$}, denoted $A \otimes B$, is a tensor over $X \times X', Y \times Y', Z \times Z'$ given by
$$A \otimes B = \sum_{\substack{(x_i,x'_{i'}) \in X \times X' \\ (y_j, y'_{j'}) \in Y \times Y' \\ (z_k, z'_{k'}) \in  Z \times Z'}} A_{ijk} B_{i'j'k'} (x_i, x'_{i'}) (y_j, y'_{j'}) (z_k, z'_{k'}).$$
The $n$th tensor power of a tensor $A$, denoted $A^{\otimes n}$, is the result of tensoring $n$ copies of $A$ together. In other words, $A^{\otimes 1}= A$, and $A^{\otimes n} = A \otimes A^{\otimes n-1}$.

Tensor products preserve many key properties of tensors. For instance, if $A \subseteq C$ and $B \subseteq D$, then $A \otimes B \subseteq C \otimes D$, and this is also true if subset is replaced by monomial degeneration, or by zeroing out.

For a nonnegative integer $k$, if $A$ is a tensor over $X,Y,Z$, and if $X_1, \ldots, X_k$ are $k$ disjoint copies of $X$, and similar for $Y$ and $Z$, then $k \odot A$ denotes the (disjoint) sum of $k$ copies of $A$, one over $X_i, Y_i, Z_i$ for each $1 \leq i \leq k$.

\subsection{Independent Triples}

Two triples $(x,y,z), (x',y',z') \in X \times Y \times Z$ are \emph{independent} if $x\neq x'$, $y\neq y'$, and $z\neq z'$. A tensor $A$ is independent if, whenever $A_{ijk} \neq 0$ and $A_{i'j'k'} \neq 0$, and $(i,j,k) \neq (i',j',k')$, then the triples $(x_i,y_j,z_k)$ and $(x_{i'},y_{j'},z_{k'})$ are independent.

\subsection{Tensor Rank} A tensor $T$ over $X,Y,Z$ is a \emph{rank-one tensor} if there are coefficients $a_x$ for each $x \in X$, $b_y$ for each $y \in Y$, and $c_z$ for each $z \in Z$ in the underlying field $\F$ such that
$$T = \left( \sum_{x \in X} a_x \cdot x \right) \left( \sum_{y \in Y} b_y \cdot y \right) \left( \sum_{z \in Z} c_z \cdot z \right) = \sum_{(x,y,z) \in X \times Y \times Z} a_x b_y c_z \cdot xyz.$$
More generally, $T$ is a \emph{rank-$k$ tensor} if it can be written as the sum of $k$ rank-one tensors. The rank of $T$, denoted $R(T)$, is the smallest $k$ such that $R$ is a rank-$k$ tensor.

We can generalize this notion slightly to define the border rank of a tensor. We will now allow the $a_x$, $b_y$, and $c_z$ coefficients to be elements of the polynomial ring $\F[\eps]$ for a formal variable $\eps$. We say that $T$ is a \emph{border rank-one tensor} if there are coefficients $a_x, b_y, c_z$ in $\F[\eps]$ and an integer $h \geq 0$ such that when
\begin{align}\label{borderrank}\left( \sum_{x \in X} a_x \cdot x \right) \left( \sum_{y \in Y} b_y \cdot y \right) \left( \sum_{z \in Z} c_z \cdot z \right)\end{align}
is expanded as a polynomial in $\eps$ whose coefficients are tensors over $X,Y,Z$, then $T$ is the coefficient of $\eps^h$, and the coefficient of $\eps^{h'}$ is $0$ for all $0 \leq h' < h$. Similarly, the border rank $\underline{R}(T)$ of $T$ is the smallest number of expressions of the form (\ref{borderrank}) whose sum, when written as a polynomial in $\eps$, has $T$ as its lowest order coefficient.

It is not hard to see that if $A$ is a monomial degeneration of $B$, then $\underline{R}(B) \leq \underline{R}(A) \leq R(A)$.

\subsection{Matrix Multiplication Tensor and Algorithms}
Now that we have defined tensor rank, we can define $\omega$ as the infimum over all reals so that $R(\langle n,n,n\rangle)\leq O(n^{\omega+\eps})$ for all $\eps>0$. 
Similarly, for any $\eps\in (0,1)$, define $\omega_\eps$ to be the smallest real such that an $n\times n^\eps$ matrix can be multiplied by an $n^\eps\times n$ matrix in $n^{\omega_\eps+o(1)}$ time.

We present a useful Lemma that follows from the work of Sch\"onhage, which shows how the tensor rank notions we have been discussing can give bounds on $\omega_\eps$.

\begin{lemma} \label{lem:schon}
If $R(f~\odot \langle n,n^\eps,n \rangle)\leq g$, then $\omega_\eps \leq \log_n (\lceil g/f\rceil)$.
\end{lemma}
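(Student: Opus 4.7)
The plan is to raise $f \odot \langle n, n^\eps, n\rangle$ to a large tensor power and then invoke Sch\"onhage's asymptotic sum inequality to aggregate the resulting identical disjoint copies into a single large rectangular matrix multiplication.

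Concretely, let $A := f \odot \langle n, n^\eps, n\rangle$, so $R(A) \leq g$. Raising to the $N$-th tensor power, submultiplicativity of rank under tensor product gives $R(A^{\otimes N}) \leq g^N$. Because matrix multiplication tensors compose under tensor product as $\langle a,b,c\rangle \otimes \langle a',b',c'\rangle = \langle aa',bb',cc'\rangle$, and because direct sums of identical tensors distribute over tensor products in the form $(k \odot T)^{\otimes N} = k^N \odot T^{\otimes N}$, one obtains $A^{\otimes N} = f^N \odot \langle n^N, n^{\eps N}, n^N\rangle$. Hence
$$R\bigl(f^N \odot \langle n^N, n^{\eps N}, n^N\rangle\bigr) \leq g^N.$$

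Next I would apply Sch\"onhage's asymptotic sum inequality in its rectangular form to this direct sum of $f^N$ identical rectangular matrix multiplication tensors. It yields, asymptotically, the bound $f^N \cdot n^{\omega_\eps \cdot N - o(N)} \leq g^N$. Taking $N$-th roots and letting $N \to \infty$ simplifies to $f \cdot n^{\omega_\eps} \leq g$, i.e., $n^{\omega_\eps} \leq g/f$. Since $\log_n$ is monotone and $g/f \leq \lceil g/f\rceil$, this gives $\omega_\eps \leq \log_n(g/f) \leq \log_n(\lceil g/f\rceil)$, as desired.

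The main obstacle is the aggregation step, i.e., getting the correct rectangular version of Sch\"onhage's asymptotic sum inequality. The standard proof indexes the $f^N$ disjoint copies by elements of $[f]^N$ and uses a Behrend-type $3$-AP-free subset $S \subseteq [f]^N$ of size $|S| = f^{N(1-o(1))}$ to embed a single large rectangular matrix multiplication tensor, of aspect ratio $1:\eps:1$ matching the original, into the direct sum; the rank bound on this single embedded tensor then yields the required asymptotic estimate. The one subtle bookkeeping point is that the three dimensions $n, n^\eps, n$ enter the aggregation symmetrically, so the aspect ratio is preserved and one extracts $\omega_\eps$ (rather than the square exponent $\omega$) cleanly.
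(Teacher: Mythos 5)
Your overall route is the same as the paper's: both proofs reduce the statement to Schönhage's result about direct sums of identical matrix multiplication tensors and then do elementary bookkeeping. The paper cites it in the ``sharing lemma'' form $R(f \odot \langle n,n^\eps,n\rangle)\leq g \Rightarrow R(f \odot \langle n^s,n^{s\eps},n^s\rangle)\leq f\lceil g/f\rceil^s$, drops all but one summand, and takes $s\to\infty$; you instead tensor-power first and invoke the rectangular asymptotic sum inequality for identical summands, which is a correct known consequence of the same sharing lemma (and even gives the marginally stronger bound $\log_n(g/f)$ without the ceiling). Your powering step and the identities $(f\odot T)^{\otimes N}=f^N\odot T^{\otimes N}$ and $\langle a,b,c\rangle\otimes\langle a',b',c'\rangle=\langle aa',bb',cc'\rangle$ are fine. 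So, treated as a citation, your argument does reach the conclusion.

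The genuine problem is your account of the aggregation step, which you yourself flag as the main obstacle: the asymptotic sum inequality is \emph{not} proved by taking a Behrend/Salem--Spencer $3$-AP-free subset $S\subseteq [f]^N$ and embedding a single large $\langle \cdot,\cdot,\cdot\rangle$ tensor of matching aspect ratio into the direct sum $f^N\odot\langle n^N,n^{\eps N},n^N\rangle$. No such embedding is known: a disjoint sum of matrix multiplication tensors does not degenerate into one larger matrix multiplication tensor of comparable total volume, and this kind of ``merging'' of disjoint matrix products is exactly what is missing from the known techniques (cf.\ the discussion of the laser method with merging in the paper's comparison with Ambainis et al.). You have conflated two different classical results: Salem--Spencer sets enter in degenerating a matrix multiplication tensor into many \emph{independent triples} (the paper's Lemma~\ref{lemma:indep}, Bl\"aser's Lemma 8.6), which goes in the opposite direction, whereas the asymptotic sum inequality rests on Schönhage's sharing lemma, a recursive reuse of the rank-$g$ decomposition of $f\odot\langle n,n^\eps,n\rangle$ that yields $R(f\odot\langle n^s,n^{s\eps},n^s\rangle)\leq f\lceil g/f\rceil^s$, after which one simply discards all but one summand since the factor $f$ is constant in $s$. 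With the correct citation (or that lemma in hand) your proof closes; with the mechanism as you described it, the key step fails.
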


\begin{proof}
By Sch\"onhage \cite{sch81} (see also \cite[Lemma 7.7]{blaser}), we have that $R(f~\odot \langle n,n^\eps,n \rangle)\leq g$ implies that for all integers $s\geq 1$, $R(f~\odot \langle n^s,n^{s\eps},n^s\rangle )\leq f \lceil g/f\rceil^s$.
Hence, multiplying an $n^s\times (n^s)^\eps$ by an $(n/s)^\eps\times n^s$ matrix can be done in $O(f\lceil g/f\rceil^s)$ time.
Thus $\omega_\eps \leq \lim_{s\rightarrow \infty} \log (f\lceil g/f\rceil^s)/\log (n^s) = \log_n (\lceil g/f\rceil)$.
\end{proof}

We can also define $\alpha$ as the largest real such that $R(\langle n,n^\alpha, n\rangle)\leq n^{2+o(1)}$. It is known that $\alpha\in [0.31, 1]$, and clearly $\alpha=1$ if and only if $\omega=2$.

\section{The mod-p tensor and its degenerations} \label{sec:modqdegens}
In this section, we give a rank expression for $T_p$, and then a monomial degeneration of $T_{q+2}$ into $CW_q$. 

\subsection{The rank of $T_p$}
Let us consider the tensor $T_p$ of addition modulo $p$ for any integer $p\geq 2$; recall that in trilinear notation, $T_p$ is defined as $$T_p = \sum_{\substack{i,j,k\in\{0,\ldots,p-1\} \\ i+j+k\equiv 0\Mod{p}}} x_iy_jz_k.$$

The rank of $T_p$ is $p$, as can be seen by the expression below. Let $w_1,\ldots,w_p\in \mathbb{C}$ be the $p$th roots of unity, meaning that $\sum_{i=1}^p w_i = 0$, and that for each $i$, $w_i^p = 1$. Then,

$$T_p = \frac{1}{p} \sum_{\ell=1}^p \left(\sum_{i=0}^{p-1} w_{\ell}^{i} x_i\right)\left(\sum_{j=0}^{p-1} w^j_{\ell}y_j\right)\left(\sum_{k=0}^{p-1} w^k_{\ell} z_k\right).$$

The above gives a rank expression for $T_q$ over $\mathbb{C}$, which is sufficient for the approaches for matrix multiplication algorithms discussed above. That said, one can easily modify it to get an expression over some other fields as well. For instance, suppose $p+1$ is an odd prime. Then, we know that $\sum_{a=1}^{p} a \equiv 0\Mod{p+1}$, and that $a^{p}\equiv 1 \Mod{p+1}$ for any $1 \leq a \leq p$, so we similarly get the following rank expression over $GF(p+1)$:

$$T_{p} = - \sum_{a=1}^p \left(\sum_{i=0}^{p-1} a^{i} x_i\right)\left(\sum_{j=0}^{p-1} a^j y_j\right)\left(\sum_{k=0}^{p-1} a^k z_k\right).$$

\subsection{Monomial degeneration of $T_{q+2}$ into $CW_q$}

Here we will show that the rotated CW tensor $CW_{q}$ for integer $q\geq 1$ is a degeneration of $T_{q+2}$. Recall that
\begin{align}\label{CWrotated}CW_q = x_0y_0z_{q+1} + x_0y_{q+1}z_0+x_{q+1}y_0z_0 + \sum_{k=1}^q \left( x_0 y_k z_{q+1-k} + x_k y_0 z_{q+1-k} + x_k y_{q+1-k} z_0 \right).\end{align}
For ease of notation, we will change the indexing of the $z$ variables in $T_{q+2}$ (i.e. rename the variables) from our original definition\footnote{For every index $k\in \{0,1,\ldots,q+1\}$, we will rename $z_k$ to $z_{k-1\pmod{q+2}}$.} to write
\begin{align}\label{Trelabeled}T_{q+2} = \sum_{\substack{i,j,k\in\{0,\ldots,q+1\} \\ i+j+k\equiv q+1\Mod{q+2}}} x_iy_jz_k.\end{align}

In this form, one can see that $CW_q$ is the subset of $T_{q+2}$ consisting of all the terms containing at least one of $x_0$, $y_0$, or $z_0$. With this in mind, our degeneration of $T_{q+2}$ is as follows. We will pick:
\begin{itemize}
    \item $a(x_0)=0$, $a(x_{q+1})=2$, and $a(x_i) = 1$ for $1 \leq i \leq q$, similarly,
    \item $b(y_0)=0$, $b(y_{q+1})=2$, and $b(y_j) = 1$ for $1 \leq j \leq q$, and,
    \item $c(z_0)=-2$, $c(z_{q+1})=0$, and $c(z_k) = -1$ for $1 \leq k \leq q$.
\end{itemize}

We need to verify that for every term $x_i y_j z_k$ in (\ref{Trelabeled}) we have $a(x_i) + b(y_j) + c(z_k) \geq 0$, and moreover that for such $x_i y_j z_k$, $a(x_i) + b(y_j) + c(z_k) = 0$ if and only if $x_i y_j z_k$ also appears in (\ref{CWrotated}). This is quite straightforward, but we do it here for completeness. Consider any term $x_i y_j z_k$ in (\ref{Trelabeled}). We consider three cases based on $k$:
\begin{itemize}
    \item If $k=0$, then our term is of the form $x_i y_{q+2-i} z_0$ for $0 \leq i \leq q+2$. This term always appears in (\ref{CWrotated}) as well, and we can see that we always have $a(x_i) = 2 - b(y_{q+2-i})$, and so $a(x_i) + b(y_{q+2-i}) + c(z_0) = 0$.
    \item If $k=q+1$, then $c(z_{q+1}) = 0$, and we always have $a,b \geq 0$, so we definitely have that $a(x_i) + b(y_j) + c(z_k) \geq 0$. Moreover, we can only achieve $0$ when $a=b=0$, with the term $x_0 y_0 z_{q+1}$, which is the only term with $z_{q+1}$ which appears in (\ref{CWrotated}).
    \item If $1 \leq k \leq q$, then since $x_0 y_0 z_k$ is not a term in (\ref{Trelabeled}), we must have that $a(x_i) + b(y_j) \geq 1$, and so $a(x_i) + b(y_j) + c(z_k) \geq 0$. Moreover, we only achieve $a(x_i) + b(y_j) + c(z_k) = 0$ when $(a,b) = (0,1)$ or $(1,0)$, which correspond to the terms of the form $ x_0 y_k z_{q+1-k}$ or $x_k y_0 z_{q+1-k}$ in (\ref{CWrotated}).
\end{itemize}

\subsection{Monomial degeneration of $T_{q+1}$ into Strassen's 1986 tensor.}

Strassen's 1986 tensor is defined for any integer $q\geq 1$ and is given by $S_q:=\sum_{i=1}^q x_0 y_i z_{q+1-i} + x_i y_0 z_{q+1-i}$. 

Similar to before, we will show that $S_q$ is a degeneration of $T_{q+1}$, which we can write as
\begin{align}\label{Trelabeled2}T_{q+1} = \sum_{\substack{i,j,k\in\{0,\ldots,q\} \\ i+j+k\equiv q\Mod{q+1}}} x_iy_jz_k.\end{align}

Our degeneration is as follows: $a(x_0) = b(x_0) = 0$, $a(x_i)=b(y_i)=1$ for all $i\geq 1$, $c(z_q)=0$ and $c(z_k)=-1$ for all $k\geq 1$. Simple casework shows again that the possible values for $a(x_i)+b(y_j)+c(z_{k})$ are $0,1,2$, and that $0$ is only achieved for the terms in $S_q$.
Among other things, this degeneration gives a simple proof that the border rank of $S_q$ is $q+1$.

Since a monomial degeneration of a rank expression gives a border rank expression, this shows in particular that the border rank of CW$_{q}$ is $q+2$. Furthermore, it shows that the best known bounds for $\omega$~\cite{coppersmith,v12,legall} can be obtained from $T_7$. Finally, since we only used monomial degenerations, we will be able to obtain lower bounds on what bounds on $\omega$ one can achieve via zeroing out powers of the CW$_{q}$ tensor.

\section{Independent Triples in Matrix Multiplication Tensors} \label{sec:triples}

In this section we show that there is a zeroing out of any matrix multiplication tensor into a fairly large independent tensor. This strengthens a classic result (see eg. \cite[Lemma 8.6]{blaser}) that any matrix multiplication tensor has a monomial degeneration into a fairly large independent tensor.

\begin{lemma}\label{lemma:indep}
For every positive integer $q$, and $\eps \in (0,1]$, there is a zeroing out of $\langle q, q^{\eps}, q \rangle^{\otimes n}$ into $q^{(1+\eps)n - o(n)}$ independent triples.
\end{lemma}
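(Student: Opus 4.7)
The plan is to identify $\langle q, q^\eps, q\rangle^{\otimes n}$ with a rectangular matrix multiplication tensor and then carry out a Behrend/Salem--Spencer 3-AP-free construction, tailored so that the surviving variables are determined by sharp set-membership conditions and hence the result is a genuine zeroing out (not merely a monomial degeneration).

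\emph{Reduction and construction.} Using $\langle a,b,c\rangle \otimes \langle a',b',c'\rangle \cong \langle aa', bb', cc'\rangle$, it suffices to produce, for $N_1 = N_3 = q^n$ and $N_2 = q^{\lceil \eps n\rceil}$, a zeroing out of $\langle N_1, N_2, N_3\rangle$ into $\geq (N_1 N_2)^{1-o(1)}$ independent triples (this matches the trivial upper bound of $\min(|X|,|Y|,|Z|) = N_1 N_2$). I would pick an odd prime $M$ with $N_3 \leq M < 2 N_3$ (Bertrand), a Behrend 3-AP-free set $S \subseteq \Z_M$ with $|S| \geq M^{1 - o(1)}$, and independent uniformly random injections $\pi_i : [N_i] \hookrightarrow \Z_M$. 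Define the zeroing out by: keep $x_{I,J}$ iff $\pi_2(J) - \pi_1(I) \in S$; keep $y_{J,K}$ iff $\pi_3(K) - \pi_2(J) \in S$; keep $z_{K,I}$ iff $\pi_3(K) - \pi_1(I) \in 2S := \{2s : s\in S\}$. Each variable is killed or kept based on its own two indices alone, so this is manifestly a zeroing out.

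\emph{Independence and counting.} A triple $(I,J,K)$ survives iff $s_1 := \pi_2(J) - \pi_1(I)$ and $s_2 := \pi_3(K) - \pi_2(J)$ both lie in $S$ and $s_1 + s_2 \in 2S$; the 3-AP-free property of $S$ forces $s_1 = s_2 =: s$, so the surviving triples are parametrized by pairs $(J, s) \in [N_2] \times S$ for which the preimages $I = \pi_1^{-1}(\pi_2(J) - s)$ and $K = \pi_3^{-1}(\pi_2(J) + s)$ exist. Independence holds because $s = (\pi_3(K) - \pi_1(I))/2$ is recoverable from $(I,K)$ alone (using that $M$ is odd and $\pi_1, \pi_3$ are injections), so two distinct $(J, s)$-pairs produce triples with no common $x$-, $y$-, or $z$-variable. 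A first-moment argument over random $\pi_1, \pi_3$ gives $\mathbb{E}[\#\text{survivors}] \sim N_2 |S| \cdot (N_1/M)(N_3/M) = N_1 N_2 \cdot (|S|/M) \cdot (N_3/M) = N_1 N_2 \cdot M^{-o(1)} = q^{(1+\eps) n - o(n)}$, so some choice of $\pi_1, \pi_3$ realizes this bound.

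\emph{Main obstacle.} The delicate point is reconciling ``zeroing out'' --- where weights lie in $\{0, +\infty\}$, so the surviving set must arise from sharp membership conditions on each variable individually --- with achieving the upper bound $N_1 N_2$ up to a subexponential factor. In a classical monomial-degeneration proof one could use richer real-valued weights (e.g.\ quadratic penalties that vanish on $S$), but here the three set-conditions using $S$, $S$, and $2S$ must be chosen just right so that 3-AP-freeness collapses the would-be three-parameter surviving family to the one-parameter family indexed by $s \in S$. This collapse is exactly what simultaneously forces independence and allows the surviving count to reach $(N_1 N_2)^{1-o(1)}$, and it is the single place where the subtler combinatorial properties of Behrend sets enter the proof.
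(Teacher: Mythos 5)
Your proposal is correct, but it reaches the lemma by a genuinely different route than the paper. The paper's proof runs in three phases: it first zeroes out all variables whose index pairs are not ``balanced'', then hashes the exponent vectors with random modular weights into a Salem--Spencer set $H$ (using the built-in identity $h_X(x_{\i\j})+h_Y(y_{\j\k})=2h_Z(z_{\k\i})$), and finally performs a cleanup zeroing-out of every variable still occurring in two surviving terms, with expected-collision counts ($\E[|P_h|],\E[|Q_h|],\E[|R_h|]$ versus $\E[|S_h|]$) showing that $q^{(1+\eps)n-o(n)}$ terms survive. You instead pass to the equivalent tensor $\langle N_1,N_2,N_3\rangle$, embed the three index ranges into $\Z_M$ by random injections, and keep each variable according to whether an associated difference lies in $S$, $S$, or $2S$; the 3-AP-freeness of $S$ collapses every surviving term to a single common difference $s$, so independence is automatic and deterministic -- no balancing phase, no pruning phase, no collision bookkeeping -- and a single first-moment computation over $\pi_1,\pi_3$ yields the count, which I verified goes through (including the recovery of $s$ from any one shared variable, using that $M$ is odd and the $\pi_i$ are injective). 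What your construction buys is a shorter, cleaner argument in which the surviving tensor is independent by design; what the paper's phase structure buys is an argument that tracks the classical Coppersmith--Winograd-style degeneration proof essentially step by step, making clear in what sense the lemma strengthens the known monomial-degeneration result. One small point to tidy: $N_2$ should be the actual middle dimension of $\langle q,q^{\eps},q\rangle^{\otimes n}$ rather than $q^{\lceil \eps n\rceil}$ (the ceiling makes $\langle N_1,N_2,N_3\rangle$ slightly larger than the tensor you are handed); this changes nothing in your argument, which only needs $N_2\le M$, and it mirrors the paper's own informal use of $[q^{\eps}]$ as an index set.
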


\begin{proof}
Recall that $\langle q, q^\eps, q \rangle = \sum_{i=1}^q \sum_{j=1}^{q^\eps} \sum_{k=1}^q x_{ij} y_{jk} z_{ki}$. Hence,
$$\langle q, q^\eps, q \rangle^{\otimes n} = \sum_{\i,\k \in [q]^n,~ \j \in [q^\eps]^n} x_{\i\j} y_{\j\k} z_{\k\i}.$$
We will zero out variables in three phases, and after the third phase we will have a sufficiently large independent tensor as desired.

\subsection{Phase one} For vectors  $\i,\k \in [q]^n$, and values $a,b \in [q]$, let $t_{ab}(\i\k)$ denote the number of $1 \leq \alpha \leq n$ such that $\i_\alpha = a$ and $\k_\alpha = b$. We say that $\i\k$ is \emph{balanced} if, for all $a,b,c,d \in [q]$, we have $t_{ab}(\i\k)=t_{cd}(\i\k)$. We similarly say that $\i\j$ is balanced if $t_{ab}(\i\j) = t_{cd}(\i\j)$ for every $a,c \in [q]$ and $b,d \in [q^\eps]$, and say that $\j\k$ is balanced similarly. In the first phase, we zero out every variable $x_{\i\j}$ such that $\i\j$ is not balanced. We similarly zero out $y_{\j\k}$ such that $\j\k$ is not balanced, and $z_{\k\i}$ such that $\k\i$ is not balanced.

Note that if $\i\k$ is balanced, then for each $a,b\in[q]$, we have $(\i_\alpha,\k_\alpha)=(a,b)$ for exactly $n/q^2$ choices of $\alpha \in [n]$. Hence, the number of choices of $\i,\k \in [q]^n$ such that $\i\k$ is balanced is exactly $L_2 := \binom{n}{\frac{n}{q^2}, \frac{n}{q^2}, \ldots, \frac{n}{q^2}} = q^{2n - o(n)}$. If $\i\k$ is balanced, then notice that the number $K_\eps$ of choices of $\j \in [q^\eps]^n$ such that $\i\j$ and $\j\k$ are also balanced is independent of what $\i$ and $\k$ are, and satisfies $K_\eps = q^{O(n)}$.

Similarly, the number of choices of $\i \in [q]^n$ and $\j \in [q^\eps]^n$ such that $\i\j$ is balanced is $L_{1+\eps} := \binom{n}{\frac{n}{q^{1+\eps}}, \frac{n}{q^{1+\eps}}, \ldots, \frac{n}{q^{1+\eps}}} = q^{(1+\eps)n - o(n)}$. Moreover, when $\i\j$ is balanced, the number $K_1$ of choices of $\k$ such that $\i\k$ and $\j\k$ are balanced satisfies $K_1 = q^{O(n)}$. Note that $L_2 K_\eps = L_{1+\eps} K_1$, since both count the number of triples remaining after phase one, and in particular, $K_1 \geq K_\eps$.

\subsection{Phase two} Let $M$ be an odd prime number to be determined. Pick $w_0, w_1, \ldots, w_n \in [M]$ independently and uniformly at random, then define the hash functions $h_X : X \to [M]$, $h_Y : Y \to [M]$, and $h_Z : Z \to [M]$, by:

$$h_X(x_{\i\j}) = 2\sum_{\alpha = 1}^n w_\alpha \cdot (\i_\alpha - \j_\alpha) \pmod{M},$$
$$h_Y(y_{\j\k}) = 2 w_0 + 2\sum_{\alpha = 1}^n w_\alpha \cdot (\j_\alpha - \k_\alpha) \pmod{M},$$
$$h_Z(z_{\k\i}) = w_0 + \sum_{\alpha = 1}^n w_\alpha \cdot (\i_\alpha - \k_\alpha) \pmod{M}.$$

Notice that, for every choice of $\i,\j,\k \in [q]^n$, we have that $h_X(x_{\i\j}) + h_Y(y_{\j\k}) = 2h_Z(z_{\k\i}) \Mod{M}$. Now, let $H \subseteq [M]$ be a subset of size $|H| \geq M^{1-o(1)}$ which does not contain any nontrivial three-term arithmetic progressions mod $M$; in other words, if $a,b,c \in H$ such that $a+b=2c \Mod{M}$, then $a=b=c$. Such a set is constructed by Salem and Spencer~\cite{salemspencer}. In the second phase, we zero out all $x_{\i\j}$ such that $h_X(x_{\i\j}) \notin H$, and similarly for the $y$ and $z$ variables. As a result, every term $x_{\i\j}y_{\j\k}z_{\k\i}$ remaining in our tensor satisfies:
\begin{itemize}
    \item $\i\j$, $\j\k$, and $\k\i$ are balanced, and
    \item $h_X(x_{\i\j}) = h_Y(y_{\j\k}) = h_Z(z_{\k\i})$.
\end{itemize}

\subsection{Phase three} In the third phase we zero out some remaining variables to ensure that our resulting tensor is independent. First, however, we will compute some expected values.

For $h \in H$, let $S_h$ be the set of terms $x_{\i\j}y_{\j\k}z_{\k\i}$ remaining in our tensor after stage two such that $h_X(x_{\i\j}) = h_Y(y_{\j\k}) = h_Z(z_{\k\i}) = h$. For a given term $x_{\i\j}y_{\j\k}z_{\k\i}$ which was not zeroed out in phase one, it will be in $S_h$ whenever $h_X(x_{\i\j}) = h$ and $h_Y(y_{\j\k}) = h$, since in that case we must also have that $h_Z(z_{\k\i}) = h$ as the three are in arithmetic progression. For a fixed choice of $\i,\j,\k$ such that $\i\j$ and $\j\k$ are balanced, we can see that $h_X(x_{\i\j})$ and $h_Y(y_{\j\k})$ are independent and uniformly random elements of $[M]$ (the randomness is over choosing the $w_\alpha$ values). Hence, this term will be in $S_h$ with probability $1/M^2$, and so $\E[|S_h|] = L_{1+\eps} \cdot K_1 / M^2$.

Next, for $h \in H$, let $P_h$ be the set of pairs of terms $(x_{\i\j}y_{\j\k}z_{\k\i}, x_{\i'\j'}y_{\j'\k'}z_{\k'\i'})$ such that both terms are in $S_h$, and $\i=\i'$ and $\j=\j'$, meaning they share the same $x$ variable. Again, there are $L_{1+\eps}$ choices for $\i$ and $\j$, then $K_1$ choices each for $\k$ and $\k'$, and similar to before, such a choice of $\i,\j,\k,\k'$ will be put in $P_h$ with probability $1/M^3$. Hence, $\E[|P_h|] \leq L_{1+\eps} \cdot K_1^2 / M^3$. Similar calculations hold if we instead look at pairs $Q_h$ which share a $y$ variable, showing that $\E[|Q_h|] \leq L_{1+\eps} \cdot K_1^2 / M^3$, or pairs $R_h$ which share a $z$ variable, showing that $\E[|R_h|] \leq L_{2} \cdot K_\eps^2 / M^3 \leq L_{1+\eps} \cdot K_1^2 / M^3$.

We now do our final zeroing out. If there are any distinct terms $x_{\i\j} y_{\j\k} z_{\k\i}$ and $x_{\i'\j'} y_{\j'\k'} z_{\k'\i'}$ remaining in our tensor such that $\i=\i'$ and $\j = \j'$, then we zero out $x_{\i\j}$. We similarly zero out any variables $y_{\j\k}$ or $z_{\k\i}$ which appear in multiple terms. As a result, our final tensor is definitely independent.

It remains to show that it has enough terms remaining. Since each pair of terms left from phase two which share a variable is removed in phase three, we see that the number of terms remaining is at least

$$\sum_{h \in H} |S_h| - 2|P_h| - 2|Q_h| - 2|R_h|.$$
Let us pick $M$ to be an odd prime number in the range $[12K_1, 24K_1]$. Hence, using our expected value calculations from before, we see that the expected number of remaining terms is at least
\begin{align*}|H| \cdot \left( \frac{L_{1+\eps} K_1}{M^2} - 6 \frac{L_{1+\eps} K_1^2}{M^3} \right) = \frac{|H| L_{1+\eps} K_1}{M^2} \left( 1 - 6 \frac{K_1}{M} \right) &\geq \frac{M^{1-o(1)} L_{1+\eps} K_1}{M^2} \left( 1 - 6 \frac{1}{12} \right)\\ &\geq \frac{L_{1+\eps}}{K_1^{o(1)}} \geq q^{(1+\eps)n - o(n)},\end{align*}
where the last step follows since $L_{1+\eps} = q^{(1+\eps)n - o(n)}$ and $K_1 = q^{O(n)}$.
By the probabilistic method, there is a choice of hash functions which achieves this expected number of independent triples, as desired.
\end{proof}

\section{Monomial Degenerations} \label{sec:zeroings}

\begin{lemma}\label{lemma:mondeg}
Suppose $A$ and $B$ are two tensors over $X,Y,Z$ such that $A$ is a monomial degeneration of $B$. Further suppose that $A^{\otimes n}$ has zeroing out into $f(n)$ independent triples. Then, $B^{\otimes n}$ has a zeroing out into $\Omega(f(n)/n^2)$ independent triples.
\end{lemma}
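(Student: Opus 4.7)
The plan is to exploit the observation that the monomial degeneration of $B$ into $A$, witnessed by functions $a, b, c$, lifts coordinate-wise to a monomial degeneration of $B^{\otimes n}$ into $A^{\otimes n}$ via the sums $a^{(n)}(x_{\i}) := \sum_{\alpha=1}^n a(x_{i_\alpha})$ and analogously $b^{(n)}, c^{(n)}$. In particular, every triple $(x_{\i}, y_{\j}, z_{\k})$ that is a nonzero term of $A^{\otimes n}$ must satisfy $a^{(n)}(x_{\i}) + b^{(n)}(y_{\j}) + c^{(n)}(z_{\k}) = 0$, so this holds in particular for the $f(n)$ independent triples produced by the assumed zeroing of $A^{\otimes n}$.

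Since $a, b, c$ take values in a bounded interval, each of $a^{(n)}, b^{(n)}, c^{(n)}$ assumes at most $O(n)$ distinct integer values on the relevant variables. I would partition the $f(n)$ retained triples into $O(n^2)$ classes indexed by the pair $(u, v) := (a^{(n)}(x_{\i}), b^{(n)}(y_{\j}))$; the value $w := c^{(n)}(z_{\k}) = -u - v$ is then determined. By pigeonhole there is some $(u^*, v^*)$ for which at least $\Omega(f(n)/n^2)$ retained triples fall into its class.

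Next I would construct the desired zeroing of $B^{\otimes n}$ by composing the given zeroing (re-interpreted as a zeroing of $B^{\otimes n}$, which is legal because a zeroing is just a rule for substituting $0$ into a subset of the variables) with a further zeroing that eliminates every $x_{\i}$ with $a^{(n)}(x_{\i}) \neq u^*$, every $y_{\j}$ with $b^{(n)}(y_{\j}) \neq v^*$, and every $z_{\k}$ with $c^{(n)}(z_{\k}) \neq w^*$. The key verification is that no surviving term of $B^{\otimes n}$ can lie outside $A^{\otimes n}$: if $B^{\otimes n}_{\i\j\k} \neq 0$ survives, then $a(x_{i_\alpha}) + b(y_{j_\alpha}) + c(z_{k_\alpha}) \geq 0$ for each $\alpha$, but the total is forced to equal $u^* + v^* + w^* = 0$, so equality holds in every coordinate and hence $A_{i_\alpha j_\alpha k_\alpha} \neq 0$ for every $\alpha$, making the term a term of $A^{\otimes n}$ as well.

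Therefore the surviving terms of $B^{\otimes n}$ under the combined zeroing are exactly the $\Omega(f(n)/n^2)$ triples from the pigeonhole step, and independence is inherited by restriction. I expect the main obstacle to be the last verification: one has to see that fixing $a^{(n)} + b^{(n)} + c^{(n)}$ to a value summing to zero collapses the monomial-degeneration inequality into an equality in every coordinate, which is exactly what prevents $B^{\otimes n}$ from picking up extra terms. Once that is in hand, the loss of $n^2$ (rather than $n^3$) is forced by the single linear constraint $u + v + w = 0$ among the three coordinate-wise sums, and everything else is bookkeeping.
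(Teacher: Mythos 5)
Your argument is correct and is essentially the same as the paper's: both lift the degeneration witnesses coordinate-wise, partition into $O(n^2)$ slices indexed by the pair of sums (with the third determined by $p+q+r=0$), apply pigeonhole, and observe that restricting $B^{\otimes n}$ to a single zero-sum slice forces the per-coordinate inequalities into equalities so that only terms of $A^{\otimes n}$ survive. The paper organizes this by first writing $A^{\otimes n}=\sum_{p+q+r=0}B^{\otimes n}_{p,q,r}$ and then pushing the given zeroing through this decomposition, while you first pigeonhole on the $f(n)$ retained triples and then build the corresponding zeroing, but these are just two presentations of the same proof.
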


\begin{proof}
Let $a : X \to \Z$, $b : Y \to \Z$, and $c : Z \to \Z$ be the functions for the monomial degeneration such that
\begin{itemize}
    \item $a(x_i) + b(y_j) + c(z_k) \geq 0$ for all $x_i y_j z_k \in B$, and
    \item furthermore $a(x_i) + b(y_j) + c(z_k) = 0$ if and only if $x_i y_j z_k \in A$.
\end{itemize}
Let $a^- := \min_{x \in X} a(x)$ and $a^+ := \max_{x \in X} a(x)$, and define $b^-$, $b^+$, $c^-$, and $c^+$ similarly. Now, $B^{\otimes n}$ is a tensor over $X^n, Y^n, Z^n$. Define $a^n : X^n \to \Z$, by $a^n(x_{i_1}, \ldots, x_{i_n}) = \sum_{\alpha=1}^n a(x_{i_\alpha})$, and define $b^n : Y^n \to \Z$ and $c^n : Z^n \to \Z$ similarly. It follows that
\begin{itemize}
    \item $a^n(x_{i_1}, \ldots, x_{i_n}) + b^n(y_{j_1},\ldots,y_{j_n}) + c^n(z_{k_1},\ldots,z_{k_n}) \geq 0$ for all $x_{i_1} \cdots x_{i_n} y_{j_1}\cdots y_{j_n}z_{k_1}\cdots z_{k_n} \in B^{\otimes n}$, and
    \item furthermore $a^n(x_{i_1}, \ldots, x_{i_n}) + b^n(y_{j_1},\ldots,y_{j_n}) + c^n(z_{k_1},\ldots,z_{k_n}) = 0$ if and only if \\ $x_{i_1} \cdots x_{i_n} y_{j_1}\cdots y_{j_n}z_{k_1}\cdots z_{k_n} \in A^{\otimes n}$.
\end{itemize}
The range of $a^n$ is integers in $[a^- n, a^+ n]$. For each integer $p$ in that range, let $X^n_p$ be the set of $x_{i_1} \cdots x_{i_n} \in X^n$ such that $a^n(x_{i_1} \cdots x_{i_n}) = p$. Define $Y^n_q$ for integers $q \in [b^- n, b^+ n]$, and $Z^n_r$ for integers $r \in [c^- n, c^+ n]$, similarly. Now, for $(p,q,r) \in [a^- n, a^+ n] \times [b^- n, b^+ n] \times [c^- n, c^+ n]$, let $B^{\otimes n}_{p,q,r}$ be the tensor one gets from $B^{\otimes n}$ by zeroing out all the $X^n$ variables not in $X^n_p$, all the $Y^n$ variables not in $Y^n_q$, and all the $Z^n$ variables not in $Z^n_r$. Then, letting $W$ be the set of triples of integers in $[a^- n, a^+ n] \times [b^- n, b^+ n] \times [c^- n, c^+ n]$, we see that
$$A^{\otimes n} = \sum_{(p,q,r) \in W \mid p+q+r=0} B^{\otimes n}_{p,q,r},$$
and each term of $A^{\otimes n}$ appears in exactly one of the summands. Now, let $A^{\otimes n \prime}$ be the zeroing out of $A^{\otimes n}$ into $f(n)$ independent triples. Let $B^{\otimes n \prime}_{p,q,r}$ be the zeroing out of $B^{\otimes n}_{p,q,r}$ in which we zero out those same variables. Hence,
$$A^{\otimes n \prime} = \sum_{(p,q,r) \in W \mid p+q+r=0} B^{\otimes n \prime}_{p,q,r},$$
where the sum is hence a disjoint sum of independent triples. The number of terms on the right is $O(n^2)$, and so at least one of the terms on the right must have size at least $|A^{\otimes n \prime}| / O(n^2) = \Omega(f(n) / n^2)$, as desired.
\end{proof}

\section{Main Theorem}\label{sec:main}

In this section, we will combine our results above with the bounds on the sizes of tri-colored sum-free sets from past work in order to prove our main theorem. Recall the definition of $\gamma_p$ from Section \ref{sec:trisets}, and define $c_p := e^{\gamma_p}$.

\begin{theorem}\label{thm:main}Let $\eps\in (0,1]$. Let T be a tensor that is a monomial degeneration of $T_p$ 
and suppose that $T^{\otimes N}$ can be zeroed out into $F~\odot \langle G,G^\eps,G\rangle $, giving a bound $\omega_\eps\leq \omega'_\eps$ where 
$G^{\omega'_\eps} = \lceil p^N/F\rceil$.
Then $\omega'_\eps\geq (1+\eps)\log_{c_p} p$. 
\end{theorem}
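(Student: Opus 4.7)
The plan is to chain four ingredients to produce a lower bound on $\omega'_\eps$: the hypothesis on $T^{\otimes N}$, Lemma \ref{lemma:indep} (matrix multiplication tensors have many independent triples in high powers), Lemma \ref{lemma:mondeg} (independent triples in powers of a tensor transfer to any tensor of which it is a monomial degeneration, up to a polynomial loss), and the upper bound of $c_p^M$ on the number of independent triples in $T_p^{\otimes M}$ coming from the Kleinberg et al.\ bound on tri-colored sum-free sets.

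First, I fix a large auxiliary parameter $n$ and take the $n$-th tensor power of the given zeroing out. Since $\odot$ denotes a sum over disjoint variable sets, we obtain a zeroing out of $T^{\otimes Nn}$ into $F^n \odot \langle G,G^\eps,G\rangle^{\otimes n}$. Applying Lemma \ref{lemma:indep} with $q=G$ inside each of the $F^n$ disjoint summands then yields a further zeroing out into $F^n \cdot G^{(1+\eps)n - o(n)}$ independent triples. Because $T$ is a monomial degeneration of $T_p$ and tensor powers preserve monomial degenerations, Lemma \ref{lemma:mondeg} applied with tensor power $Nn$ produces a zeroing out of $T_p^{\otimes Nn}$ into $\Omega(F^n G^{(1+\eps)n - o(n)}/(Nn)^2)$ independent triples.

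Second, I translate any zeroing out of $T_p^{\otimes M}$ into independent triples into a tri-colored sum-free set in $\Z_p^M$. Every surviving term $x_\i y_\j z_\k$ automatically satisfies $\i+\j+\k \equiv 0$; moreover, given the surviving triples $\{(\i^{(\ell)}, \j^{(\ell)}, \k^{(\ell)})\}_\ell$, any mixed sum $\i^{(\ell_1)} + \j^{(\ell_2)} + \k^{(\ell_3)} = 0$ would produce a triple all of whose variables survived the zeroing out and which therefore would itself lie in the support, forcing $\ell_1 = \ell_2 = \ell_3$ by independence. Combining the upper bound of at most $c_p^{Nn}$ on such sets with the previous paragraph gives
$$F^n \, G^{(1+\eps)n - o(n)} \;\leq\; O\bigl((Nn)^2\bigr) \cdot c_p^{Nn};$$
taking $n$-th roots and sending $n \to \infty$ kills both the polynomial factor and the $o(n)$ in the exponent of $G$, leaving the clean inequality $F \cdot G^{1+\eps} \leq c_p^N$.

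Finally, I convert this into a bound on $\omega'_\eps$ using the hypothesis $G^{\omega'_\eps} = \lceil p^N/F \rceil$. The lower bound $F \geq p^N/G^{\omega'_\eps}$ turns the display into $p^N \cdot G^{1+\eps-\omega'_\eps} \leq c_p^N$. This already rules out $\omega'_\eps \leq 1+\eps$ (otherwise $p \leq c_p$, contradicting the known $c_p < p$) and can be rewritten as $G^{\omega'_\eps - 1 - \eps} \geq (p/c_p)^N$; and $F \geq 1$ gives the companion inequality $G^{\omega'_\eps} \leq p^N$. Eliminating $\log G$ between these two is a short rearrangement yielding $\omega'_\eps \log c_p \geq (1+\eps)\log p$, i.e.\ $\omega'_\eps \geq (1+\eps) \log_{c_p} p$. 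I expect the step requiring the most care to be the correspondence with tri-colored sum-free sets in the second paragraph, which depends on the subtle point that a genuine zeroing out must preserve \emph{every} compatible term on the surviving variables; everything else is bookkeeping built on top of Lemmas \ref{lemma:indep} and \ref{lemma:mondeg}.
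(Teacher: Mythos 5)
Your proposal is correct and follows essentially the same route as the paper: Lemma~\ref{lemma:indep} plus Lemma~\ref{lemma:mondeg} produce many independent triples in a power of $T_p$, these are identified with a tri-colored sum-free set (via exactly the observation you flag, that a zeroing out retains every term all of whose variables survive), and the $c_p$ bound is then combined with $G^{\omega'_\eps}=\lceil p^N/F\rceil$ by the same algebra. Your only deviations are refinements in bookkeeping --- amplifying with an auxiliary power $n\to\infty$ to get the clean inequality $F\,G^{1+\eps}\leq c_p^N$ instead of the paper's $o(1)$ terms at fixed $N$, and deriving $\omega'_\eps>1+\eps$ from $c_p<p$ rather than asserting it --- which, if anything, make the argument slightly tighter.
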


\begin{proof} Let $g=G^{1/N}$ so that $G=g^N$, and let $f=F^{1/N}$ so that $F=f^N$.
Since $T^{\otimes N}$ can be zeroed out into $F~\odot \langle G,G^\eps,G\rangle $, via Lemma~\ref{lemma:indep}, $T^{\otimes N}$ can be zeroed out into $f^N\cdot g^{(1+\eps)N-o(N)}$ independent triples.
Due to Lemma~\ref{lemma:mondeg} this means that $T_p^{\otimes N}$ can also be zeroed out into $D=f^N\cdot g^{(1+\eps)N-o(N)}/N^2$ independent triples.

Now, let $S=\{(a_1,b_1,c_1),\ldots,(a_D,b_D,c_D)\}$ be the 
 indices of the $D$ independent triples obtained from $T_p^{\otimes N}$. Because they are obtained by zeroing out $T_p^{\otimes N}$, for every $i$, $a_i+b_i+c_i\equiv 0$ in $Z_p^N$. Now suppose that for some $i,j,k$, $a_i+b_j+c_k\equiv 0$ in $Z_p^N$.
If $i,j,k$ are not all the same, then $(a_i,b_j,c_k)$ cannot be in $S$ as the triples in $S$ are independent. However, the only way for a triple of $T_p^{\times N}$ to be removed is if $X_{a_i}$ or $Y_{b_j}$ or $Z_{c_k}$ is set to zero. Suppose that $X_{a_i}$ is set to $0$ (the other two cases are symmetric). Then there can be no triple in $S$ sharing $a_i$ as its first index. Thus in fact $S$ forms a tri-colored sum-free set. Hence $D\leq c_p^{N}$.

From our earlier bound on $D$ we get that $f^N\cdot g^{(1+\eps)N-o(N)}/N^2 \leq c_p^N$, and taking the $N$th root of both sides yields $f g^{1+\eps-o(1)}/N^{2/N}\leq c_p$.

Recall that $G^{\omega'_\eps} = \lceil p^N/F\rceil$, so that $g=(\lceil p/f\rceil)^{1/\omega'_\eps}$. Plugging in above, we get that 
$f (\lceil p/f\rceil)^{(1+\eps)/\omega'_\eps -o(1)}\leq c_p.$ 
Hence, $f^{1-(1+\eps)/\omega'_\eps + o(1)} p^{(1+\eps)/\omega'_\eps-o(1)}\leq c_p.$
Since $\omega'_\eps\geq (1+\eps)$, we have that $f^{1-(1+\eps)/\omega'_\eps + o(1)}\geq 1$.
We obtain that
$(1+\eps)/\omega'_\eps \leq \log_p c_p + o(1)$ and 
\begin{align*}
\omega'_\eps\geq (1+\eps-o(1))\log_{c_p} p.
\end{align*}
\end{proof}

As a corollary we obtain the following upper bound on what $\alpha$ can be achieved by zeroing out.

\begin{corollary} Let $T$ be a tensor that is a monomial degeneration of $T_p$. If one can prove $\alpha\leq \alpha'$ using the zeroing-out approach then, $\alpha'\leq \frac{2}{\log_{c_p} p} -1$.
\end{corollary}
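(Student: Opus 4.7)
The plan is to specialize Theorem~\ref{thm:main} to the regime where $\omega'_\eps$ equals $2$, since that regime is exactly what governs lower bounds on $\alpha$. The whole corollary should drop out by rearranging the inequality in Theorem~\ref{thm:main}.

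First I would unpack the hypothesis. Since $\alpha$ is defined as the largest $\eps$ with $R(\langle n, n^\eps, n\rangle) = n^{2+o(1)}$, a proof that $\alpha \geq \alpha'$ by zeroing out must exhibit, for every $\delta > 0$, a monomial degeneration $T$ of $T_p$ and an integer $N$ such that $T^{\otimes N}$ zeros out into $F \odot \langle G, G^{\alpha'}, G\rangle$ with $G^{\omega'_{\alpha'}} = \lceil p^N/F\rceil$ and $\omega'_{\alpha'} \leq 2 + \delta$. Moreover, one may replace $T^{\otimes N}$ by $T^{\otimes Nk}$ for arbitrarily large $k$; this zeros out into $F^k \odot \langle G^k, (G^k)^{\alpha'}, G^k\rangle$ and gives the same $\omega'_{\alpha'}$, so the $o(1)$ term appearing in the conclusion of Theorem~\ref{thm:main} can be driven to zero without weakening the construction.

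Now I would invoke Theorem~\ref{thm:main} with $\eps = \alpha'$, which gives $\omega'_{\alpha'} \geq (1 + \alpha' - o(1)) \log_{c_p} p$. Chaining this with $\omega'_{\alpha'} \leq 2 + \delta$, and then letting $\delta \to 0$ and $k \to \infty$, yields $(1+\alpha') \log_{c_p} p \leq 2$, which rearranges to the claimed bound $\alpha' \leq \frac{2}{\log_{c_p} p} - 1$.

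There is no substantive obstacle here; the corollary is essentially a one-line consequence of Theorem~\ref{thm:main}. The only point that requires any care is the limiting step in $\delta$ and $N$, and this is handled by the tensor-power observation above: one always has a family of valid constructions whose bounds on $\omega'_{\alpha'}$ are the same but whose $o(1)$ correction in the Salem--Spencer / Lemma~\ref{lemma:indep} input shrinks with $k$, making the limit rigorous.
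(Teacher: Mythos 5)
Your proposal is correct and is exactly the intended argument: the paper states this corollary without proof as an immediate consequence of Theorem~\ref{thm:main}, obtained by applying it with $\eps=\alpha'$ and $\omega'_{\alpha'}$ arbitrarily close to $2$, which gives $(1+\alpha')\log_{c_p}p\leq 2$ and hence the claimed bound. Your extra care with the $\delta\to 0$ limit and the tensor-power trick to absorb the $o(1)$ terms is a reasonable way to make rigorous what the paper leaves implicit.
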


\bibliographystyle{alpha}
\bibliography{papers}

\appendix

\section{Supporting Calculations} \label{sec:calculations}

We recall some definitions from earlier in the paper.
For any integer $q \geq 2$, let $\rho$ be the unique number in $(0,1)$ satisfying
$$\rho + \rho^2 + \cdots + \rho^{q-1} = \frac{q-1}{3} (1 + 2 \rho^q ).$$
Then, define $\gamma_q \in \R$ by $\gamma_q := \ln(1-\rho^q) - \ln(1-\rho) - \frac{q-1}{3} \ln(\rho)$. Then, the lower bound on $\omega$ we get from using $T_q$ is $2 \ln(q)/\gamma_q$. Here we show that this approaches $2$ as $q \to \infty$:

\begin{lemma} \label{lem:goesto2}
$\lim_{q \to \infty} \frac{\gamma_q}{\ln(q)} = 1$.
\end{lemma}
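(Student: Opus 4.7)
The plan is to pin down the asymptotics of $\rho = \rho_q$ as $q \to \infty$ tightly enough to control each of the three terms in the definition of $\gamma_q$. Specifically, I will show $1 - \rho_q = \Theta(1/q)$, which automatically forces $\rho_q^q$ into a compact subinterval of $(0,1)$. Substituting these estimates into $\gamma_q = \ln(1-\rho^q) - \ln(1-\rho) - \frac{q-1}{3}\ln\rho$, the first and third terms will be $O(1)$ while $-\ln(1-\rho) = \ln q + O(1)$, so $\gamma_q = \ln q + O(1)$ and the limit follows immediately.

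The upper bound $1 - \rho \leq 3/(q-1)$ is immediate from the defining equation: its left-hand side satisfies $\sum_{j=1}^{q-1}\rho^j = (\rho-\rho^q)/(1-\rho) \leq 1/(1-\rho)$, and the right-hand side is at least $(q-1)/3$. For the matching lower bound $1 - \rho = \Omega(1/q)$, my plan is to apply AM-GM to the geometric series: the geometric mean of $\rho^1, \rho^2, \ldots, \rho^{q-1}$ is exactly $\rho^{q/2}$, so $\sum_{j=1}^{q-1}\rho^j \geq (q-1)\rho^{q/2}$. Plugging into the defining equation gives $3\rho^{q/2} \leq 1 + 2\rho^q$, which factors as $(2\rho^{q/2} - 1)(\rho^{q/2} - 1) \geq 0$; since $\rho \in (0,1)$ makes the second factor negative, the first must be nonpositive, so $\rho^{q/2} \leq 1/2$ and hence $\rho^q \leq 1/4$. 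Taking logarithms yields $q \ln(1/\rho) \geq \ln 4$, and combining with the upper bound (which for large $q$ gives $\rho \geq 1/2$ and hence $\ln(1/\rho) \leq 2(1-\rho)$) delivers $1 - \rho \geq (\ln 2)/q$.

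With both bounds in hand the final computation is routine: from $1 - \rho = \Theta(1/q)$ I get $-\ln(1-\rho) = \ln q + O(1)$; from $\rho^q \in [e^{-O(1)},\, 1/4]$ I get $\ln(1-\rho^q) = O(1)$; and from $q \ln(1/\rho) = O(1)$ I get $\frac{q-1}{3}\ln(1/\rho) = O(1)$. Adding the three contributions gives $\gamma_q = \ln q + O(1)$, so $\gamma_q/\ln q \to 1$. The main obstacle is the lower bound on $1-\rho$: the elementary geometric-sum inequality $\sum \rho^j \leq 1/(1-\rho)$ gives only the upper direction, and the alternative route of substituting $\rho = 1 - t/q$ and passing to a limiting transcendental equation in $t$ is messier and requires ruling out degenerate limiting behavior. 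The AM-GM step on $\rho, \rho^2, \ldots, \rho^{q-1}$ replaces that whole analysis with a one-line quantitative estimate, which I regard as the conceptual core of the argument.
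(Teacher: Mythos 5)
Your proof is correct: the AM--GM step is sound (the geometric mean of $\rho,\rho^2,\ldots,\rho^{q-1}$ is indeed $\rho^{q/2}$, so the defining equation forces $3\rho^{q/2}\le 1+2\rho^q$, and with $u=\rho^{q/2}\in(0,1)$ the factorization $(2u-1)(u-1)\ge 0$ forces $u\le 1/2$), and combined with $\rho\ge 1-3/(q-1)$ and $\ln(1/\rho)\le 2(1-\rho)$ for $\rho\ge 1/2$ it gives $1-\rho=\Theta(1/q)$, $\rho^q\le 1/4$, and $\tfrac{q-1}{3}\ln(1/\rho)=O(1)$, hence $\gamma_q=\ln q+O(1)$. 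This is a genuinely more complete route than the paper's. The paper's proof uses only the bound you also derive, $1/(1-\rho)>(q-1)/3$, and from it lower-bounds $\gamma_q/\ln q$ by $\ln((q-1)/3)/\ln q+\tfrac{(q-1)}{3\ln q}\ln\bigl(1-\tfrac{3}{q-1}\bigr)\to 1$; it never explicitly proves the matching upper direction $\gamma_q\le(1+o(1))\ln q$, which is in effect deferred to the separately asserted inequality $e^{\gamma_q}<q$ coming from the sum-free-set literature. Your AM--GM lower bound on $1-\rho$ (equivalently, $\rho^q\le 1/4$) supplies exactly that missing half in a self-contained way, so you obtain the two-sided estimate $\gamma_q=\ln q+O(1)$ rather than just $\liminf_q\gamma_q/\ln q\ge 1$. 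What the paper's argument buys is brevity, given the external fact $e^{\gamma_q}<q$; what yours buys is a sharper, self-contained statement. One small simplification: you do not actually need the lower bound $\rho^q\ge e^{-O(1)}$, since $\rho^q\le 1/4$ alone already gives $\ln(1-\rho^q)\in[\ln(3/4),0)$, which is all the first term requires.
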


\begin{proof}
Note that, since $\rho \in (0,1)$, we have
$$\frac{1}{1-\rho} = 1 + \rho + \rho^2 + \cdots > \rho + \rho^2 + \cdots + \rho^{q-1} = \frac{q-1}{3} (1 + 2 \rho^q ) > \frac{q-1}{3} .$$
Rearranging, we see that $\rho > 1 - 3/(q-1)$. Hence,
\begin{align*}\frac{\gamma_q}{\ln(q)} = \frac{\ln \left( \frac{1 - \rho^q}{1 - \rho}  \right)}{\ln(q)} + \frac{(q-1) \ln(\rho)}{3\ln(q)} &> \frac{\ln \left( 1 + \rho + \cdots + \rho^{q-1}  \right)}{\ln(q)} + \frac{(q-1) \ln(1 - \frac{3}{q-1})}{3\ln(q)} \\ &>  \frac{\ln \left( (q-1)/3  \right)}{\ln(q)} + \frac{(q-1) \ln(1 - \frac{3}{q-1})}{3\ln(q)}.\end{align*}

As $q \to \infty$, we have that $\ln_q((q-1)/3) \to 1$ and $(q-1) \ln_q (1 - 3/(q-1)) \to 0$, as desired.
\end{proof}

\end{document}